\newcommand{\CA}{\operatorname{CA}} 
\newcommand{\Lprop}[1]{L^{[#1]}} 
\newcommand{\ACA}{\mathcal{A}} 
\newcommand{\Conf}{\mathfrak{C}} 
\newcommand{\Neigh}{\mathcal{N}} 
\newcommand{\Configs}{\operatorname{Conf}} 
\newcommand{\NN}{\mathbb{N}} 
\newcommand{\ZZ}{\mathbb{Z}} 
\newcommand{\QQ}{\mathbb{Q}} 
\newcommand{\stB}{\raisebox{-.2em}{\includegraphics[page=4,scale=.8]{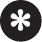}}}
\newcommand{\stE}{\raisebox{-.2em}{\includegraphics[page=2,scale=.8]{symbols.pdf}}}
\title{Comparing 1D and 2D Real Time on Cellular Automata}
\author[1]{Anaël Grandjean}
\author[1]{Victor Poupet}
\affil[1]{LIRMM, Université Montpellier 2\\
161 rue Ada, 34392 Montpellier, France}
\authorrunning{A. Grandjean and V. Poupet}
\subjclass{F.1.1 Models of Computation}
\keywords{Cellular automata, real time, language recognition}
\begin{document}

\maketitle

\begin{abstract}
	We study the influence of the dimension of cellular automata (CA) for real time language recognition of one-dimensional languages with parallel input. Specifically, we focus on the question of determining whether every language that can be recognized in real time on a 2-dimensional CA working on the Moore neighborhood can also be recognized in real time by a 1-dimensional CA working on the standard two-way neighborhood.
	
	We show that 2-dimensional CA in real time can perform a linear number of simulations of a 1-dimensional real time CA. If the two classes are equal then the number of simulated instances can be polynomial.
\end{abstract}

\section{Introduction}

Cellular automata (CA) were first introduced in the 1940s (published posthumously in 1966) by J.~von Neumann and S.~Ulam as a mathematical model to study self-replication \cite{neumann66}. Although initially studied as a a dynamical system, A.R.~Smith III proved that it was possible to embed Turing machines in their behavior \cite{smithIII71} and were as such a convenient model for massively parallel computation.

Cellular automata are also well suited to work on various dimensions. The original CA by von Neumann is 2-dimensional, but the natural simulation of Turing machines is on one dimension. CA represent therefore a natural way to study how the dimension of the space affects the computing power of the machines \cite{cole69,terrier06}.

This article presents some results comparing the computational power of 1-dimensional and 2-dimensional CA on parallel input. The main open question in that respect is to determine whether or not all languages that can be recognized in real time on 2-dimensional CA can also be recognized in real time in 1 dimension \cite{delorme00}.

The organization of the article is as follows. Section~\ref{sec:definitions} recalls the basic definitions and concepts about cellular automata that are used throughout the article. Section~\ref{sec:markers} presents a construction on 1-dimensional CA that shows how it is possible to consider that a real time CA knows approximately where the middle (or any other fixed rational proportion) of the input word is from the start. This construction is used to prove the main theorem of Section~\ref{sec:power of space}. In Section~\ref{sec:central compression} we present a classic technique on cellular automata that compresses the space-time diagram of a 1-dimensional CA. The main novelty here is that we perform the compression not on the middle of the input word but on an approximate position ``near the center''. Section~\ref{sec:power of space} presents and proves the main result of the article, which states in essence that 2-dimensional CA can simulate in real time a linear number of simulations of a 1-dimensional real time CA. Finally Section~\ref{sec:consequences} discusses some consequences of the main theorems.

\section{Definitions}
\label{sec:definitions}

\subsection{Cellular Automata}

\begin{definition}[Cellular Automaton]
	A \emph{cellular automaton} (CA) is a quadruple $\ACA=(d, Q, \Neigh, \delta)$ where:
	\begin{itemize}
		\item $d\in\NN$ is the dimension of $\ACA$;
		\item $Q$ is a finite set whose elements are called \emph{states};
		\item $\Neigh\subset \ZZ^d$ is a finite set called \emph{neighborhood} of $\ACA$ such that $0\in\Neigh$;
		\item $\delta:Q^\Neigh \rightarrow Q$ is the local transition function of $\ACA$.
	\end{itemize}
	
	A \emph{configuration} of the automaton is a mapping $\Conf:\ZZ^d\rightarrow Q$. The elements of $\ZZ^d$ are called \emph{cells} and for a given cell $c\in\ZZ^d$, we say that $\Conf(c)$ is the state of $c$ in the configuration $\Conf$. The set of all configurations over $Q$ is denoted $\Configs(Q)$. For a given configuration $\Conf\in\Configs(Q)$ and a cell $c\in\ZZ^d$, define the \emph{neighborhood of $c$ in $\Conf$}
	\begin{displaymath}
		\Neigh_\Conf(c) = \left\{
			\begin{array}{rcl}
				\Neigh & \rightarrow & Q \\
				n & \mapsto & \Conf(c+n)
			\end{array}
		\right.
	\end{displaymath}
	From the local transition function $\delta$, we define the \emph{global transition function} $\Delta_\ACA$ of the automaton. The image of a configuration $\Conf$ by $\Delta_\ACA$ is obtained by replacing the state of each cell $c$ by the image by $\delta$ of the neighborhood of $c$ in $\Conf$ :
	\begin{displaymath}
		\Delta_\ACA: \left\{
			\begin{array}{rcl}
				\Configs(Q) & \rightarrow & \Configs(Q) \\
				\Conf & \mapsto & \left\{
					\begin{array}{rcl}
						\ZZ^d & \rightarrow & Q \\
						c & \mapsto & \delta(\Neigh_\Conf(c))
					\end{array}
				\right.
			\end{array}
		\right.
	\end{displaymath}
\end{definition}

In this article, we will only consider 1-dimensional CA working on the standard neighborhood
$\Neigh_{\operatorname{std}}=\{-1,0,1\}$
and 2-dimensional cellular automata working on the Moore neighborhood 
$\Neigh_{\operatorname{M}} = \{(x,y)\ |\ -1\leq x, y\leq 1\}$
(see Figure~\ref{fig:neighborhoods}).

\begin{figure}[htbp]
	\centering
		\includegraphics[scale=1]{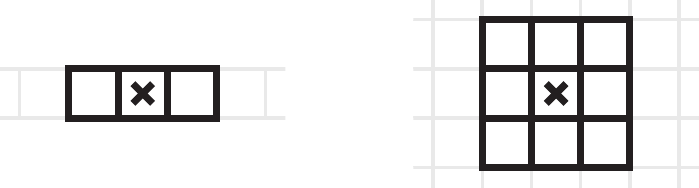}
	\caption{The standard 1-dimensional neighborhood (left) and the Moore 2-dimensional neighborhood (right).}
	\label{fig:neighborhoods}
\end{figure}

\subsection{Language Recognition}

\begin{definition}[Language Recognizer]
	Given a finite alphabet $\Sigma$ and a language $L\subseteq \Sigma^*$, a $d$-dimensional CA $\ACA$ with states $Q$ is said to recognize $L$ in time $f:\NN \rightarrow \NN$ with accepting states $Q_a\subseteq Q$ and quiescent state $q_0\in Q$ if, $\Sigma\subseteq Q$ and for any word $w=u_0u_1\ldots u_{n-1}\in\Sigma^*$, starting from the configuration
\begin{displaymath}
	\begin{array}{rcl}
		\ZZ^d & \rightarrow & Q \\
		(x, y_1, y_2, \ldots, y_{d-1}) & \mapsto & \left\{ \begin{array}{l}
			u_{x} \quad\textrm{if}\quad x\in\llbracket 0, n-1\rrbracket \quad\textrm{and}\quad \forall i, y_i=0\\
			q_0 \quad\textrm{ otherwise}
		\end{array}\right.
	\end{array}
\end{displaymath}
the state of the origin at time $f(n)$ is in $Q_a$ if and only if $w\in L$.
\end{definition}

\begin{definition}[Real and Linear Time]
	The \emph{real time} function is the function $n\mapsto n-1$. This time function corresponds to the minimal time necessary for information held on the last letter of the input word to reach the origin and hence affect the recognition of the word. The class of languages recognized in real time on 1 dimensional (resp. 2-dimensional) CA will be denoted $\CA(n)$ (resp. $\CA_2(n)$).
	
	We will say that a language is recognized in \emph{linear time} if it can be recognized in time $n\mapsto 2n$. The class of languages recognized in linear time on 1-dimensional (resp. 2 dimensional) CA will be denoted $\CA(2n)$ (resp. $\CA_2(2n)$).
\end{definition}

Because there are linear acceleration theorems on 1-dimensional and 2-dimensional CA \cite{mazoyer92} (on the simple neighborhoods that we consider), any language recognized in time $n\mapsto kn$ for $k > 0$ is also recognized in time $n\mapsto 2n$, which explains the denomination of \emph{linear time}. As for real time, it is a long open question to determine whether $\CA(n)=\CA(2n)$.

In this article, we investigate whether adding a dimension to the automaton increases linear and real time recognition power, namely if $\CA(n)=\CA_2(n)$ and $\CA(2n)=\CA_2(2n)$. These questions are long open problems (see problem 26 in \cite{delorme00}).

\subsection{Tools}

\subsubsection{Space-Time Diagram}
A space-time diagram is a 2-dimensional representation of the evolution of a 1-dimensional CA from a specific configuration. Each configuration in the evolution is represented by a line of the diagram, with time going from bottom to top. We do not usually consider space-time diagrams of 2-dimensional CA as these would be in 3 dimensions.

A specific point in space and time will be referred to as a \emph{site} of the space-time diagram.

\subsubsection{Layers}
Given a CA $\ACA=(d,Q,\Neigh,\delta)$, \emph{adding a layer} to $\ACA$ that performs a certain task consists in designing a specific CA working on a set of states $Q'$ that performs the task and extending the set of states of $\ACA$ to the product $Q\times Q'$. In doing so, the new product automaton can mimic the behavior of $\ACA$ on its first coordinate and perform the new task on the second coordinate. From there, it is possible to modify the behavior of the automaton by having the two layers interact with each other.

As long as each layer requires only a finite number of states and there are only a finite number of layers, the total number of states of the resulting automaton remains finite.

\section{Markers}
\label{sec:markers}

In this section we investigate whether ``marking'' specific positions on the input word can help real time recognition of a language. The results in this section are a generalization of a technique used by O.~Ibarra and T.~Jiang in their proof that if $\CA(n)$ is closed under reversal then $\CA(n)=\CA(2n)$ \cite{ibarra88}.

Given a finite alphabet $\Sigma$, we mark some letters of words of $\Sigma^*$ by considering the extended alphabet $\Sigma\times \{0,1\}$. We say that the word $(u_i,\delta_i)_{i\in\llbracket 0, n-1\rrbracket} \in (\Sigma\times\{0,1\})^*$ corresponds to the word $(u_i)_{i\in \llbracket 0, n-1\rrbracket}$ where all the positions $i$ such that $\delta_i=1$ have been marked ($\delta_i=0$ means that the letter has not been marked).

\subsection{Exact and Fuzzy Marking}

\begin{definition}[Proportional Marking]
	Given $\alpha \in [0, 1[$ and a language $L$ over an alphabet $\Sigma$, we define $\Lprop\alpha\in(\Sigma\times\{0,1\})^*$ as the language of words of $L$ for which only the letter at position $\lfloor\alpha n\rfloor$ has been marked ($n$ is the length of the word).
	
	Formally, the word $(u_i,\delta_i)_{i\in\llbracket 0, n-1\rrbracket} \in (\Sigma\times\{0,1\})^*$ is in $\Lprop\alpha$ if and only if $u_0u_1\ldots u_{n-1}\in L$, $\delta_{\lfloor\alpha n\rfloor} = 1$ and for all other $i$, $\delta_i = 0$.
\end{definition}

When working on real time CA algorithms it would sometimes be convenient to know where the middle of the word is, or some other specific ratio. Whether marking the letter of the input word corresponding to a fixed proportion of the word length can help recognize in real time languages that were not in $\CA(n)$ is still an open question to our knowledge\footnote{Note that if $\CA(n) = \CA(2n)$ then proportional marking with a rational ratio does not help real time recognition since the cell at position $\lfloor\alpha n\rfloor$ can be marked in $n$ time steps for any rational $\alpha$.}. Although an answer to this question would be very interesting, we can actually make many constructions with a weaker version that we can prove : instead of requiring a mark on the exact cell at position $\lfloor\alpha n\rfloor$, it is enough to have a mark on one of the cells between positions $\lfloor\alpha n\rfloor$ and $\lfloor\beta n\rfloor$ for $\alpha < \beta$.

\begin{definition}[Fuzzy Marking]
	Given $\alpha, \beta \in [0, 1[$ with $\alpha\leq \beta$ and a language $L$ over an alphabet $\Sigma$, we define $\Lprop{\alpha,\beta}\in(\Sigma\times\{0,1\})^*$ as the language of words of $L$ for which exactly one letter between position $\lfloor\alpha n\rfloor$ and $\lfloor\beta n\rfloor$ has been marked ($n$ is the length of the word).
	
	Formally, the word $(u_i,\delta_i)_{i\in\llbracket 0, n-1\rrbracket} \in (\Sigma\times\{0,1\})^*$ is in $\Lprop{\alpha,\beta}$ if and only if $u_0u_1\ldots u_{n-1}\in L$, there is exactly one $i_0$ such that $\delta_{i_0} = 1$ and $i_0\in\llbracket\lfloor\alpha n\rfloor, \lfloor\beta n\rfloor\rrbracket$.
\end{definition}

The rest of this section will be devoted to the proof of the following theorem:
\begin{theorem}
	\label{theo:markers}
	For any language $L$ and any $\alpha, \beta\in [0,1]$ with $\alpha < \beta$,
	\begin{displaymath}
		\Lprop{\alpha,\beta} \in \CA(n) \Rightarrow L \in \CA(n)
	\end{displaymath}
\end{theorem}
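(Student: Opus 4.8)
The plan is to show that a real-time 1D CA recognizing $L$ can be built out of one recognizing $\Lprop{\alpha,\beta}$ by internally generating, and checking in parallel, all the admissible markings. The key observation is that the input word $w=u_0\ldots u_{n-1}$ is in $L$ if and only if \emph{some} marking of a single letter at a position in $\llbracket\lfloor\alpha n\rfloor,\lfloor\beta n\rfloor\rrbracket$ yields a word in $\Lprop{\alpha,\beta}$; there are $\Theta(n)$ such candidate positions, so running a single instance of the $\Lprop{\alpha,\beta}$-recognizer is not enough. Instead I would run many shifted instances in parallel on distinct layers, each instance carrying its mark on a different cell, and accept iff at least one of them accepts.

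The main obstacle is that only a bounded number of layers are allowed, yet there are linearly many candidate marked positions. The standard resolution (the Ibarra--Jiang idea referred to in the excerpt) is a space--time trade: a real-time computation on input of length $n$ only reads the cells in the cone of influence of the origin, so at time $t$ a cell at distance $t$ from the origin can afford to be ``late'' by $t$ steps without affecting the real-time verdict. Concretely, I would have a signal sweep from the right end of the word toward the origin; as it passes cell $i$ (roughly at time $n-1-i$), it launches a fresh copy of the $\Lprop{\alpha,\beta}$-recognizer on the suffix-truncated-appropriately portion of the configuration, with the mark placed at cell $i$. Because that copy is started late but also has fewer cells between it and the origin, it still delivers its answer to the origin by time $n-1$. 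Only the copies with $i\in\llbracket\lfloor\alpha n\rfloor,\lfloor\beta n\rfloor\rrbracket$ correspond to legal markings; a cell can test whether it lies in that window using the marker construction of Section~\ref{sec:markers}'s precursor ideas, or more simply each launched instance carries a flag recording whether its marked cell was in the admissible range, and instances outside the range are ignored when combining verdicts.

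The steps, in order, are: (1) Fix a real-time CA $\mathcal{B}$ for $\Lprop{\alpha,\beta}$; note that each input position $i$ in the window gives a distinct word $w^{(i)}\in\Lprop{\alpha,\beta}$-candidate, and $w\in L \iff \exists i,\ w^{(i)}\in\Lprop{\alpha,\beta}$. (2) Describe the sweeping signal from the right boundary that, upon reaching cell $i$ at time $\approx n-1-i$, spawns on a fresh ``track'' a simulation of $\mathcal{B}$ restricted to the relevant subconfiguration, with cell $i$ marked. (3) Check the timing: argue that a copy spawned at cell $i$ at time $n-1-i$ has its origin-answer available at time $(n-1-i)+((i)-0)=n-1$, so every spawned instance reports exactly at real time — this is the arithmetic heart of the argument and must be done carefully, including the off-by-one bookkeeping in the definition of real time as $n\mapsto n-1$ and the fact that these instances are on sub-inputs whose lengths and ``$n$'' values must still make $\lfloor\alpha n\rfloor$ etc. refer to the \emph{original} $n$. (4) Handle the admissibility window: only combine answers from instances whose marked position lies in $\llbracket\lfloor\alpha n\rfloor,\lfloor\beta n\rfloor\rrbracket$; since $\alpha<\beta$ this window has length $\Theta(n)\geq 1$ for $n$ large, so at least one instance exists, and for small $n$ a bounded table handles the finitely many cases. (5) The origin computes the OR of all admissible verdicts as they arrive and enters an accepting state at time $n-1$ iff one of them accepted.

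The hard part will be (3): making the distinct simulations coexist in a single automaton with finitely many states while each runs at the right speed and lands its verdict precisely at time $n-1$, and making sure the ``where does the right end start the sweep'' and ``which portion of the tape does instance $i$ see'' bookkeeping is consistent — in particular that truncating the configuration for instance $i$ does not corrupt the length parameter on which $\mathcal{B}$'s acceptance depends. A clean way to sidestep per-instance truncation is to observe that $\mathcal{B}$, being real-time, never inspects cells outside its own cone anyway, so every instance can simply run on the full configuration with only its mark differing; then the only nontrivial content is the launch schedule and the timing verification, which I expect to be the technical core of the proof.
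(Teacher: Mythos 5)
There is a genuine gap, and it sits exactly where you flagged it: the obstacle that ``only a bounded number of layers are allowed, yet there are linearly many candidate marked positions'' is not actually overcome by your sweeping-signal schedule. The timing claim in step (3) conflates signal travel time with computation time. The verdict of an instance of $\mathcal{B}$ with the mark at cell $i$ is, by definition, the state of the origin at time $n-1$ of a full real-time computation on the whole marked word; it is not a piece of data that cell $i$ possesses at time $n-1-i$ and can ship leftward in $i$ steps. If you genuinely start that instance at time $n-1-i$ it finishes at time $2n-2-i$, too late. If instead (your ``clean way to sidestep truncation'') every instance runs from time $0$ and coincides with the unmarked computation outside the cone of its mark, then at time $n-1$ the origin lies inside the cones of \emph{all} $\Theta(n)$ instances simultaneously, so a single cell would have to carry $\Theta(n)$ mutually distinct simulation states --- impossible with a finite state set. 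Either way, step (3), which you correctly identify as the heart of the argument, fails.

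The paper's proof resolves this with an idea your proposal has no analogue of: it does \emph{not} try every position in the window $\llbracket\lfloor\alpha n\rfloor,\lfloor\beta n\rfloor\rrbracket$. It fixes an input-independent, geometrically spaced set $M$ of candidate positions (integers whose binary writing is supported on the $n_0+1$ most significant bits) and proves a lemma that for every $n$ at least one element of $M$ falls in the window --- this is precisely where the strict inequality $\alpha<\beta$ is consumed, via the choice of $n_0$ with $1+2^{-n_0}<\beta/\alpha$. The geometric spacing then guarantees that at any moment only a constant number $k_0+1$ of elements of $M$ can still be viable markers, so each cell runs a bounded number of simulations and discards the stalest one when a new candidate appears; each simulation checks a posteriori that its marker lies in the window using signals of slopes $\alpha$ and $\beta$ (this part matches your step (4)). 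A second construction you would also need, and do not address, is how the automaton marks the positions of $M$ on the input in the first place: the paper does this by compressing the diagram by a factor $3$ and running a binary counter along the main diagonal to detect indices in $M$. To repair your proof you would have to replace the linear family of instances by such a constant-width family of universal candidates; the sweeping-signal launch schedule cannot be patched to make $\Theta(n)$ independent real-time verdicts coexist at the origin.
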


First, notice that it is sufficient to prove the theorem for $\alpha$ and $\beta$ rationals with $0<\alpha<\beta<1$. Indeed, for any $\alpha, \beta \in [0,1]$ and $\alpha', \beta'\in\QQ$ such that $\alpha \leq \alpha' < \beta' \leq \beta$ if $\Lprop{\alpha,\beta}\in \CA(n)$ then we can recognize $\Lprop{\alpha',\beta'}$ in real time by simulating the automaton $\ACA$ that recognizes $\Lprop{\alpha,\beta}$ in real time while simultaneously verifying that the marker is placed between $\lfloor\alpha' n\rfloor$ and $\lfloor\beta' n\rfloor$, which can be done in real time because $\alpha'$ and $\beta'$ are both rationals. If the marker is in the right range then the input word is in $\Lprop{\alpha',\beta'}$ if and only if $\ACA$ accepts it. If the marker is not in the right range, then the word is not accepted. From now on, we can therefore assume that $\alpha, \beta\in\QQ$ and $0<\alpha<\beta<1$.

To prove the theorem, we assume that we have a CA $\ACA$ that recognizes the language $\Lprop{\alpha,\beta}$ in real time, for some $\alpha, \beta\in [0,1]$, with $\alpha < \beta$. We will show how to make a CA $\ACA'$ that recognizes $L$ in real time.

The construction will be done in two steps. First we describe a CA that starts with a fixed set of positions marked (independently of the input length) and we show that with these markers we can recognize $L$ in real time. Then we transform this CA into one that does not need the positions to be marked ahead of time.

\subsection{Universal Markers}

Let $n_0$ be the smallest integer such that $1 + \frac{1}{2^{n_0}} < \frac{\beta}{\alpha}$ and consider the set $M$ of integers whose binary representation is such that all the digits 1 are on the $(n_0+1)$ most significant bits (see Figure~\ref{fig:M example}):
\begin{displaymath}
	M = \{ x\times 2^k \ |\ x \in\ \rrbracket 0, 2^{n_0+1}-1\rrbracket, k \in \NN\} = \llbracket 0, 2^{n_0}-1\rrbracket \cup \{ x\times 2^k \ |\ x \in \llbracket 2^{n_0}, 2^{n_0+1}-1\rrbracket, k\in\NN\}
\end{displaymath}

The set $M$ contains an initial segment $\llbracket 0, 2^{n_0}\llbracket$ and copies of $\llbracket 2^{n_0}, 2^{n_0+1}\llbracket$ multiplied by the powers of $2$ (indicated as bracketed ``blocks'' in Figure~\ref{fig:M example}).

\begin{figure*}[htbp]
	\centering
	\includegraphics[scale=.8]{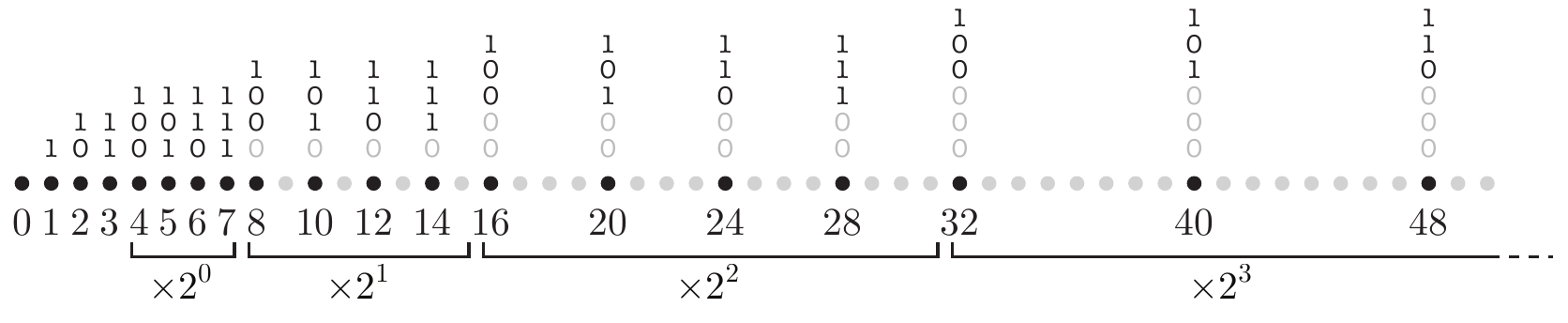}
	\caption{The set $M$ for $n_0=2$.}
	\label{fig:M example}
\end{figure*}

Let us now consider the ratio between consecutive elements of $M$. Denote by $(m_i)_{i\in\NN}$ the elements of $M$ in increasing order. For all $m_i\geq 2^{n_0}$, we have
\begin{equation}
	\label{eqn:bounds mi}
	1 + \frac{1}{2^{n_0+1}-1} \leq \frac{m_{i+1}}{m_i} \leq 1 + \frac{1}{2^{n_0}}
\end{equation}

The lower bound corresponds to the ratio between the last element of a block and the first of the next block (in the example with $n_0=2$, this ratio is $\frac{8}{7}$) and the upper bound corresponds to the ratio between the two first elements of a block (in the example it is $\frac{5}{4}$).

From the definition of $n_0$, we have
$\forall m_i \geq 2^{n_0}, \frac{m_{i+1}}{m_i} < \frac{\beta}{\alpha}$
and
$\frac{m_{i+1}}{\beta} < \frac{m_i}{\alpha}$
so intervals
$[\frac{m_i}{\beta}, \frac{m_i}{\alpha}]$
and
$[\frac{m_{i+1}}{\beta}, \frac{m_{i+1}}{\alpha}]$
overlap, and hence
$[2^{n_0}, +\infty[\ \subseteq\ \bigcup_{m\in M} \left[\frac{m}{\beta}, \frac{m}{\alpha}\right]$.

From this, we get $\forall x\geq 2^{n_0}, \exists m\in M, \lfloor\alpha x\rfloor \leq m \leq \lfloor\beta x\rfloor$.

Since $M$ contains all the elements in the the missing initial segment $\llbracket 0, 2^{n_0}\rrbracket$, we have proved the following lemma:
\begin{lemma}
	\label{lem:interval}
	$\forall n\in\NN, \exists m\in M, \quad \lfloor\alpha n\rfloor \leq m\leq \lfloor\beta n\rfloor$
\end{lemma}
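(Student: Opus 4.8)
The plan is to split on the size of $n$, exploiting the two halves of the description of $M$: the initial segment $\llbracket 0, 2^{n_0}-1\rrbracket$ handles short words, while the ``blocks'' together with inequality~(\ref{eqn:bounds mi}) handle long ones. For the short case, suppose $\lfloor\beta n\rfloor < 2^{n_0}$. Then, since $\alpha<\beta$, we have $0 \le \lfloor\alpha n\rfloor \le \lfloor\beta n\rfloor$, and $\lfloor\beta n\rfloor$ itself lies in $\llbracket 0, 2^{n_0}-1\rrbracket \subseteq M$; so $m=\lfloor\beta n\rfloor$ is already a witness and nothing more is needed.

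For the long case, suppose $\lfloor\beta n\rfloor \ge 2^{n_0}$. I would let $m$ be the largest element of $M$ that is at most $\lfloor\beta n\rfloor$; since $2^{n_0}\in M$ we get $m\ge 2^{n_0}$, and by construction $m\le\lfloor\beta n\rfloor$, so it only remains to verify $m\ge\lfloor\alpha n\rfloor$. Argue by contradiction: if $m < \lfloor\alpha n\rfloor$, let $m'$ be the successor of $m$ in $M$ (which exists, $M$ being infinite). Maximality of $m$ forces $m' \ge \lfloor\beta n\rfloor + 1 > \beta n$, while the upper bound of~(\ref{eqn:bounds mi}) applies to $m$ because $m\ge 2^{n_0}$ and, combined with the defining inequality $1+\frac{1}{2^{n_0}} < \frac{\beta}{\alpha}$ of $n_0$, gives $m' \le m\bigl(1+\tfrac{1}{2^{n_0}}\bigr) < \tfrac{\beta}{\alpha} m$. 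Chaining these yields $\beta n < \tfrac{\beta}{\alpha} m$, i.e. $\alpha n < m$, which contradicts $m < \lfloor\alpha n\rfloor \le \alpha n$. Hence $m\ge\lfloor\alpha n\rfloor$, and $m$ is the desired element.

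Equivalently, and more in the spirit of the preceding paragraphs, one can phrase the long case geometrically: for $m_i\ge 2^{n_0}$ the intervals $[m_i/\beta,\,m_i/\alpha]$ and $[m_{i+1}/\beta,\,m_{i+1}/\alpha]$ overlap, since $m_{i+1}/\beta \le m_i/\alpha$ is exactly $m_{i+1}/m_i<\beta/\alpha$, so $\bigl[2^{n_0},+\infty\bigr) \subseteq \bigcup_{m\in M}[m/\beta,\,m/\alpha]$, and then $n$ lying in one such interval gives $\alpha n\le m\le\beta n$, hence $\lfloor\alpha n\rfloor\le m\le\lfloor\beta n\rfloor$. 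I do not expect a genuine obstacle here: the whole argument reduces to the ratio bound~(\ref{eqn:bounds mi}) and the choice of $n_0$. The one point to be careful about is that~(\ref{eqn:bounds mi}) is valid only once $m_i\ge 2^{n_0}$, which is precisely why the initial segment of $M$ is kept — to cover the range of small $n$ where the block estimates do not yet apply.
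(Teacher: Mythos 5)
Your proof is correct and follows essentially the same route as the paper: the initial segment of $M$ disposes of small values, and for $\lfloor\beta n\rfloor\ge 2^{n_0}$ the conclusion comes from the ratio bound~(\ref{eqn:bounds mi}) together with the defining inequality of $n_0$ — your second, ``geometric'' phrasing is literally the paper's argument. Your primary phrasing (contradiction on the largest $m\le\lfloor\beta n\rfloor$) is only a minor repackaging, and if anything handles the floor/boundary bookkeeping a little more cleanly than the interval-covering statement.
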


We now know that for any input word $w$ of length $n$, at least one of the elements of $M$ lies between $\lfloor\alpha n\rfloor$ and $\lfloor\beta n\rfloor$ and can therefore be used by $\ACA$ as a marker to know whether $w$ is in $L$.

Moreover, from Equation~(\ref{eqn:bounds mi}), we have $\forall i,k\in \NN,$
\begin{equation}
	\label{eqn:growth}
	m_i\left(1 + \frac{1}{2^{n_0+1}-1}\right)^k \leq m_{i+k}
\end{equation}

Define $k_0$ as the smallest integer such that
\begin{equation}
	\label{eqn:defi k0}
	\frac{1}{\alpha} \leq \left(1 + \frac{1}{2^{n_0+1}-1}\right)^{k_0}
\end{equation}

Equations~(\ref{eqn:growth}) and (\ref{eqn:defi k0}) state that for any $i$, if $m_{i+k_0+1}\leq n$ then $m_{i+1} \leq \lfloor\alpha n\rfloor$ and thus $m_{i} < \lfloor\alpha n\rfloor$, which means that if a word is long enough to have a letter on $m_{i+k_0+1}$, then $m_i$ is out of the range of valid markers. As a consequence, it is never necessary to consider more than $(k_0+1)$ elements of $M$ at any given time.

We can now describe the first part of the construction. We assume that the automaton is given as input a word $w$ of $\Sigma^*$ on which all letters at indexes in $M$ are marked. On such an input the automaton simulates the behavior of $\ACA$ on $w$ (as if no letter was marked), but each marked cell also starts a separate simulation of $\ACA$ that considers that the letter is the only marked letter of the input. However, as previously observed, only the $(k_0+1)$ simulations corresponding to the largest elements of $M$ are significant, all others correspond to markers at positions before the required range. This means that each cell only needs to simulate at most $(k_0+1)$ computations of $\ACA$ and whenever a new computation should be taken into account, the one corresponding to the lowest element of $m$ is discarded.

\begin{figure*}[htbp]
	\centering
	\includegraphics[scale=1]{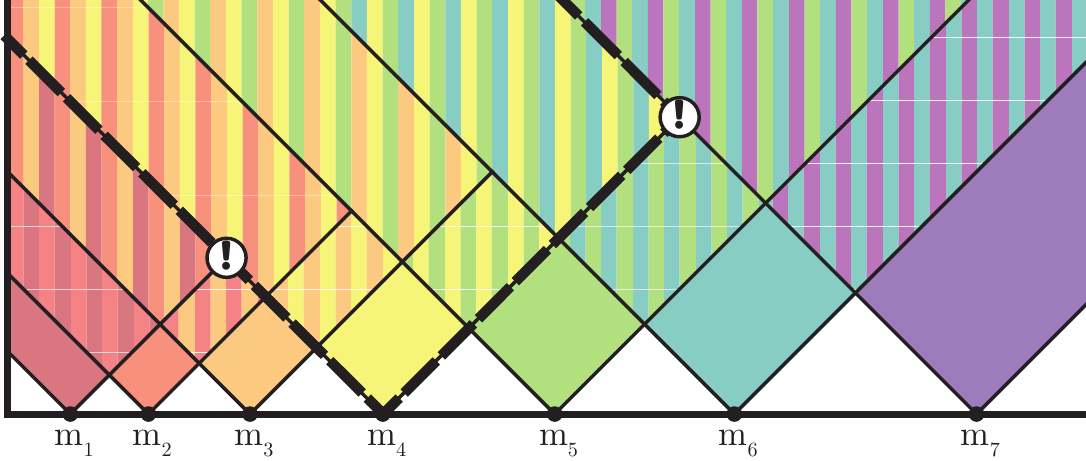}
	\caption{Space-time diagram of the simulation. Each cone represents the area on which a simulation corresponding to an element of $M$ is done. The cones extend to the left until the origin, but are interrupted to the right when there are too many simulations running at once.}
	\label{fig:cones}
\end{figure*}

This behavior is illustrated by Figure~\ref{fig:cones}. This figure represents a space-time diagram of the automaton. Each cone corresponds to a simulation of $\ACA$ for which the marked cell is the origin of the cone. The figure corresponds to a case where $k_0=2$, meaning that at most 3 simulations are performed in parallel by each cell. The thick dashed line illustrates the area of the space-time diagram on which the simulation corresponding to the marker on $m_4$ is performed. Since this specific simulation starts on the cell $m_4$, all space-time sites outside of the cone starting from that cell are not performing this specific simulation (they are however simulating the behavior of $\ACA$ without any marker, wich coincides with the behavior with a marker on $m_4$ on said sites out of the cone). As time passes, more and more cells are included in this cone and start performing this specific simulation. The two sites indicated by \stB{} correspond to events where a cell enters a fourth cone. Instead of starting a fourth simulation, it discards the simulation corresponding to the lowest $m_i$: on the left, the simulation for a marker at $m_1$ is discarded, on the right it's the simulation corresponding to the marker at $m_4$ that is discontinued.

In each simulation of $\ACA$, the automaton checks that the marker is located between $\lfloor\alpha n\rfloor$ and $\lfloor\beta n\rfloor$ by sending two signals from the marker towards the origin, one at speed $\alpha$ and the other at speed $\beta$ (it is possible if $\alpha$ and $\beta$ are rationals). If the marker is in the correct range, the signal moving at speed $\beta$ will arrive before real time while the one moving at speed $\alpha$ will arrive after real time.

From Lemma~\ref{lem:interval}, we know that there is a marker $m$ in the correct range and since $\ACA$ properly recognizes $\Lprop{\alpha,\beta}$, the simulation of $\ACA$ for the marker $m$ will let the automaton know whether $w$ is in $L$ or not.

\subsection{Construction of $M$}
\label{ssec:construction_of_M}

We now have to remove the requirement that the elements of $M$ be marked on the input. To do this, we use a space-time compression technique: instead of starting the computation immediately, the automaton moves the states from the input word towards the orgin to group them three by three, and only then simulates the behavior of the original (uncompressed) CA. By performing such a compression, the initial configuration is mapped to the space-time line of slope 2, and the computation takes place in the cone between this line and the vertical axis (as illustrated by the green cone in Figure~\ref{fig:marking} in which each dark green cell on the right border holds 3 states from the initial configuration). Although the space-time diagram is strongly modified by the compression, the computation of the states on the origin cell does not suffer any slow down.

To mark the elements of $M$ for the compressed computation, the CA builds a binary counter on the main diagonal of the space-time diagram to obtain the index of each transverse diagonal (see Figures~\ref{fig:marking} and \ref{fig:marking_detail}).

\begin{figure}[t]
\begin{minipage}[b]{.475\linewidth}
	\centering
		\includegraphics[page=1]{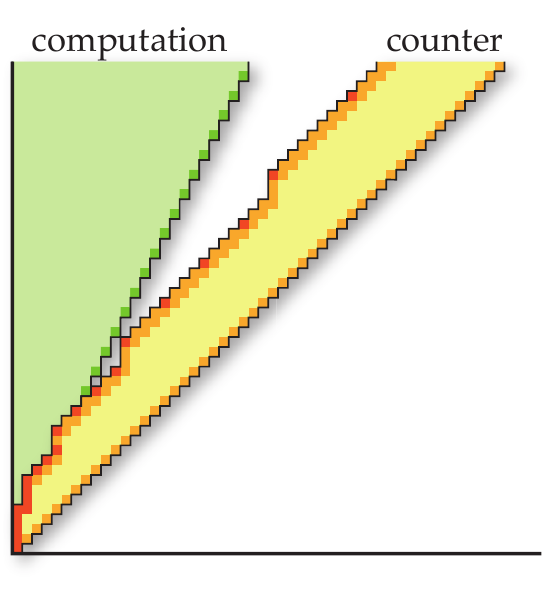}
	\caption{The diagonal counter is added on the compressed space-time diagram. Because the diagonal indexes are of logarithmic length, the counter and the main computation overlap only on a finite number of cells. Diagonals of indexes in $M$ (here with $n_0=2$) are represented with a darker square.}
	\label{fig:marking}
\end{minipage}%
\hspace{.05\linewidth}%
\begin{minipage}[b]{.475\linewidth}
	\centering
		\includegraphics[page=2]{marking.pdf}
	\caption{Detailed behavior of the diagonal counter. Least significant bits are on the bottom right. For better readability the digits of indexes on odd diagonals are represented in light grey. Indexes in $M$ are marked with a (*) (here M is defined with $n_0=2$) and the signal sent towards the main computation is represented by an arrow.}
	\label{fig:marking_detail}
\end{minipage}
\end{figure}

Since it is easy to recognize binary representations of elements in $M$ (all bits but the $(n_0+1)$ most significant must be $0$), a signal can be sent along the diagonals whose index is in $M$ so that the letters of the input word at positions in $M$ can be marked before the compressed computation effectively starts.

From an input word $w$ in $\Sigma^*$, the automaton can therefore simulate the previously described automaton as if the positions in $M$ were marked from the start, and determine in real time whether $w$ is in $L$. This concludes the proof of Theorem~\ref{theo:markers}.

\section{Central Compression}
\label{sec:central compression}

In this section we describe a way to simulate the behavior of a 1-dimensional CA $\ACA$ on an input $w\in\Sigma^*$ working in real time with another 1-dimensional CA $\ACA'$ on input $w$ with a marked position, by compressing the space-time diagram of $\ACA$. 

\subsection{General Description}

Assume that a special position has been marked on the input word of $\ACA'$. We want to group the states of the original simulated CA by groups of 3 around the mark as illustrated by Figure~\ref{fig:compressed sim}. To do so, the letters of the input word (represented as large circles in the figure) are shifted towards the marked position (indicated by a thick dashed line). Because the letters do not know in advance whether the mark is to their right or to their left, the AC uses two separate layers, one that shifts the letters to the right and the other to the left. Letters that move away from the mark will never be grouped and will not affect the simulation.

\begin{figure}[htb]
	\begin{minipage}{.475\linewidth}
		\centering
		\includegraphics[page=5, width=.7\linewidth]{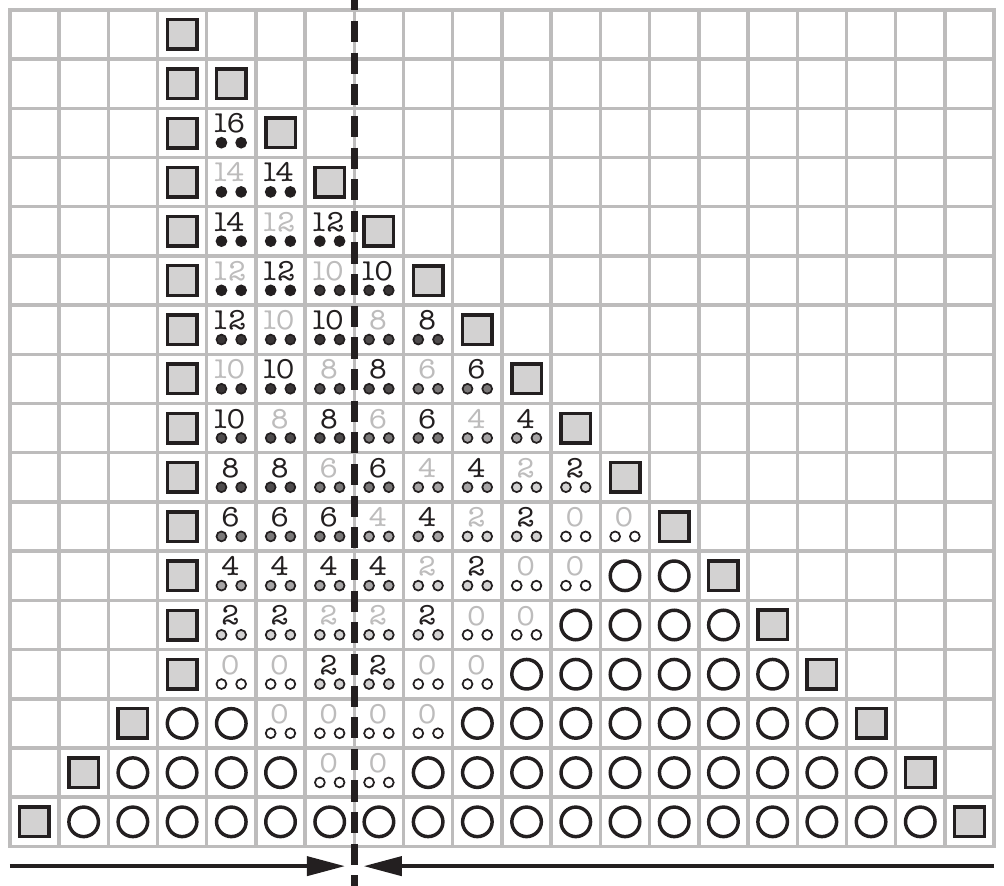}
		\caption{Simulation of a CA by compressing its space-time diagram by a factor 3 around a given mark (indicated by a thick dashed line). Each cell in the simulating area holds 3 states that correspond to states in the original space-time diagram. Numbers on the cells indicate which time step of the original automaton they are currently simulating.}
		\label{fig:compressed sim}
	\end{minipage}%
	\hspace{.05\linewidth}%
	\begin{minipage}{.475\linewidth}
		\centering
		\includegraphics{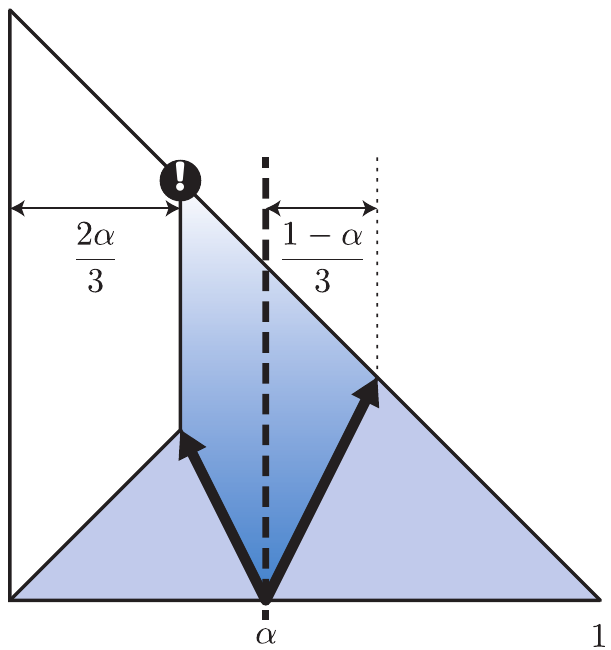}
		\caption{Diagram of the compression around a mark (thick dashed line) at position $\lfloor\alpha n\rfloor$. The thick arrows illustrate the sites where the letter of the input words are first taken into account. The site where the result of the simulated automaton is computed is indicated by the exclamation mark.}
		\label{fig:compressed schema}
	\end{minipage}
\end{figure}

When letters reach the marked position, they stack on the corresponding cell. When a cell has 3 letters, it is considered full and its neighbors start gathering letters in turn. In Figure~\ref{fig:compressed sim}, grouped states are represented by small circles.

Once a cell is fully grouped, it watches its neighbors until it has enough information to simulate 3 steps of the original automaton at once on all its grouped states. This happens when its neighbors are fully grouped and their simulated time is at least equal to its own. During the compressed simulation, the difference between the simulated times of a cell and its neighbor is at most 3 (it can be -3, 0 or 3 since the simulation advances by 3 steps at a time). If a cell advances faster than its neighbor it has to memorize its previous state so that the neighbor can use it when doing its own transition. In Figure~\ref{fig:compressed sim}, the number in each cell represents the simulated time step: a cell numbered 3 for instance has to wait until both its neighbors are labelled 3 or more before it can compute step 6 for all 3 of its grouped states. Cells represented with a grey square are cells that don't contain any significant information (they correspond to sites that are outside of the real time cone in the original space-time diagram) so their neighbors do not need to wait for their information.

If the mark around which the cells are grouped is in the first half of the input word (which is the case in Figure~\ref{fig:compressed sim} as there are 9 letters left of the mark and 15 right) the simulation can take place properly as the left part can compute its states faster and have the information ready for the right part. The key point is that from the time when the rightmost cell is fully grouped (this time is indicated by a thick horizontal line in Figure~\ref{fig:compressed sim}) all the sites on the diagonal must be able to advance their computation by 3 steps. This guarantees that the leftmost cell has eventually computed as many steps of the original diagram as if it had started at the indicated time and advanced by 3 steps each time, which corresponds to the whole computation of the original diagram.

In Figure~\ref{fig:compressed sim}, the leftmost cell of the compressed area seems to be 2 steps behind real time at the end of the simulation (the initial configuration has 24 letters so the real time is 23). However, because the cell holds the states of the 3 leftmost cells of the original configuration at time 21, it has all the relevant information to determine the state of the origin at time 23.

There are of course many rounding problems when the number of letters left or right or the mark is not a multiple of 3. However these roundings cause at most a constant delay, which can be corrected by using a constant speed-up theorem \cite{mazoyer92}.

\subsection{Properties}

By compressing the space-time diagram of the automaton around the marked position, we are able to perform the same computation with some significant differences.

First, the letters of the input word are not taken into account from the start of the computation, but rather in a sequential order. The time at which a given letter of the input word is effectively considered to determine the result of a transition in the simulation is proportional to its distance to the initial mark (see Figure~\ref{fig:compressed schema} in which the sites where the letters of the input word are first taken into account are along the thick arrows).

Second, the result of the computation is obtained significantly before real time, on a cell that it not the origin. The site where the result is obtained is represented by \stE{} in Figure~\ref{fig:compressed schema}.

Let us denote by $\alpha$ the proportion of the word at which the mark is set. When using the compression in a later section, we will need the free space left of the compressed area (which is of width $\frac{2\alpha}{3}$) to be larger than each of the sides of the compressed area. As said before, we need $\alpha \leq \frac 1 2$ for the simulation to work without delay, which means that the left side of the compressed area is smaller than the right side. Since the right side is of width $\frac{1-\alpha}{3}$, this means that we want $\frac{1-\alpha}{3} \leq \frac{2\alpha}{3}$, which amounts to $\frac 1 3 \leq \alpha \leq \frac 1 2$.

\section{The Power of Space}
\label{sec:power of space}

In this section we compare the computational power of 2-dimensional CA working on the Moore neighborhood to that of 1-dimensional CA working on the standard neighborhood.

\begin{definition}
	Given a marked language $L\subseteq (\Sigma\times\{0,1\})^*$, we define $\widetilde L\subseteq \Sigma^*$ as the language obtained by removing the marks of words in $L$ ($\widetilde L$ can be seen as the result of the first projection map on $L$).
\end{definition}

\begin{theorem}
	\label{theo:linear single mark}
	For any language $L\subseteq (\Sigma\times\{0, 1\})^*$ of words having at most one marked position, $L\in \CA(2n) \Rightarrow \widetilde L\in\CA_2(2n)$.
\end{theorem}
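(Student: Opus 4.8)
The plan is to exploit the second, free dimension of a 2D CA to run many shifted copies of a 1D linear-time computation simultaneously, so that \emph{one} of those copies happens to be centered on the marked position and hence recognizes $L$. Concretely: let $\ACA$ be the 1D CA recognizing $L$ in linear time (time $n\mapsto 2n$), given an input with its at-most-one mark. The key difficulty is that a generic 2D CA on the Moore neighborhood, when the input is the single row $y=0$, has at most a ``diamond'' (actually square, under Moore) of active cells at time $t$, so by real/linear time only cells within distance $O(n)$ of the origin along the first axis can have been reached — the extra dimension does not by itself give extra time. What it does give is \emph{space}: a quadratic number of cells become reachable. So I want to use those cells to precompute, in parallel, a linear family of candidate simulations, one per possible mark position (or per a suitable coarser family of positions, as in Section~\ref{sec:markers}), and then funnel the correct answer back to the origin before time $2n$.

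First I would set up the geometry. Place the input word along the first axis on row $y=0$. For each $k\in\llbracket 0,n-1\rrbracket$, the row $y=k$ (or a diagonal slab of cells associated to $k$) will be used to carry out the compressed simulation of $\ACA$ of Section~\ref{sec:central compression} relative to the mark being at position (approximately) $k$; the central-compression construction is exactly what makes each such simulation fit in a cone of the right shape and finish early. I would broadcast the input word into each of these rows: in the Moore neighborhood, a copy of the row $y=0$ can be pushed up to row $y=k$ and arrive there at time $k$, with no horizontal distortion, so row $y=k$ has the full input by time $k$ and has $2n-k$ further steps available before linear time $2n$ — which, since the compressed simulation around a mark at proportion $\alpha\in[\tfrac13,\tfrac12]$ runs in well under $2n$ steps and produces its answer early and off-origin, leaves room to slide that answer back down to the origin. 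The rounding issues (mark position not a multiple of $3$, $\alpha$ not exactly rational here because $k$ ranges over all integers) are absorbed by the constant speed-up theorem \cite{mazoyer92}, just as in Section~\ref{sec:central compression}; and to handle the range of $\alpha$ I would use the marker machinery of Theorem~\ref{theo:markers}, i.e.\ reduce to a bounded family $M$ of candidate positions whose ratios stay in a controlled window, rather than literally one row per $k$.

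The next step is the selection and collection phase. Exactly one of the broadcast rows corresponds to the true mark position (when there is a mark); in each row the simulation of $\ACA$ additionally checks, by the two signals at speeds $\alpha$ and $\beta$ of Section~\ref{sec:markers}, that its own candidate position really is where the input's mark sits — rows whose candidate position is wrong either see no mark or see the mark outside their checking window and output ``reject/irrelevant''. So the rows collectively produce at most one ``accept'' verdict, and it is the correct verdict for $\widetilde L$ membership (when there is no mark at all, $L$ contains no such word, so all rows correctly reject). Each row's verdict, obtained early and at an off-origin cell, is sent diagonally down-and-toward-the-origin; the Moore neighborhood lets a signal from row $y=k$ reach the origin in $k$ additional steps plus its horizontal displacement, and one checks that the sum stays $\le 2n-1$. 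The verdicts are OR-combined along the way (a standard converging-signals argument), and the origin outputs accept iff some row said accept.

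The step I expect to be the main obstacle is the timing budget: verifying that for every $k$ the chain ``broadcast up ($k$ steps)'' $+$ ``compressed simulation of $\ACA$ in the $k$-th slab'' $+$ ``route the verdict back to the origin'' totals at most $2n-1$, uniformly in $k$ and $n$. The worst case is the largest useful $k$ (mark near proportion $\tfrac12$), where broadcasting costs the most; here one leans on the facts established in Section~\ref{sec:central compression} that the compressed computation finishes strictly before real time and that the free space to the left of the compressed cone (width $\tfrac{2\alpha}{3}$) dominates each side, which is precisely the inequality $\tfrac13\le\alpha\le\tfrac12$ engineered there — so the verdict starts its descent with enough slack. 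A secondary nuisance is bounding the number of distinct slabs and simulations so the state set stays finite; this is handled exactly as in the discussion around Equations~(\ref{eqn:growth}) and (\ref{eqn:defi k0}), which cap the number of simultaneously relevant candidates by a constant $k_0+1$. Once the timing inequality is checked, assembling the layers (broadcast layer, per-slab compressed-simulation layer, marker-check layer, verdict-routing/OR layer) is routine by the layering technique of Section~\ref{sec:definitions}.
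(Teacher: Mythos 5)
Your proposal imports the entire machinery of the real-time result (Theorem~\ref{theo:single mark}: central compression, the marker construction of Section~\ref{sec:markers}, the bounded family $M$) into a statement that only asks for \emph{linear} time, and this both overcomplicates the argument and introduces errors. The paper's proof is elementary: at each time $t\leq n$ the input is copied one line up and line $t$ starts a fresh, uncompressed simulation of $\ACA$ on $w$ with a mark artificially placed at position $t$; line $k$ finishes its $2n$-step simulation by time $k+2n\leq 3n$; all verdicts are then routed back to the origin in at most $n$ further steps, an extra unmarked simulation handles the no-mark case, and the resulting $4n$-time recognizer is brought down to $2n$ by linear acceleration. No compression, no markers, and no tight timing budget are needed --- your worry that everything must reach the origin by time $2n-1$ is the real-time constraint, not the linear-time one, where any $kn$ bound suffices.

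Beyond the overcomplication there are genuine gaps. First, your reduction to ``a bounded family $M$ of candidate positions \ldots rather than literally one row per $k$'' breaks the theorem: $w\in\widetilde L$ iff \emph{some} position $k\in\llbracket 0,n-1\rrbracket$ (or no position) yields a marking of $w$ in $L$, so all $n$ candidates must be simulated; the set $M$ of Section~\ref{sec:markers} serves the unrelated purpose of guaranteeing a compression mark in a fixed proportional window and covers only logarithmically many positions. Second, your selection phase has each row verify ``that its own candidate position really is where the input's mark sits,'' but the input to the 2-dimensional automaton is the \emph{unmarked} word --- there is no mark to locate; each row must instead add its own hypothetical mark. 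Third, the claim that ``when there is no mark at all, $L$ contains no such word'' is unwarranted: $L$ consists of words with \emph{at most} one mark and may well contain unmarked words, which is why the paper runs the extra unmarked simulation. Finally, the compression of Section~\ref{sec:central compression} is calibrated for a real-time $\ACA$; applying it to a $2n$-time $\ACA$ would require redoing its timing analysis, which your proposal does not do and, for this statement, does not need to.
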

\begin{proof}
	Let $L\subseteq (\Sigma\times\{0, 1\})^*$ be a language in $\CA(2n)$ of words having at most one marked position and $\ACA$ be a 1-dimensional CA that recognizes $L$ in time $n\mapsto 2n$. Let us describe a 2-dimensional CA $\ACA'$ that recognizes $\widetilde L$ in linear time.
	
	The input of $\ACA'$ is an unmarked word $w\in\Sigma^*$ of length $n$. The idea is to use the second dimension of $\ACA'$ to run $n$ simulations of $\ACA$, one for each possible position of the mark as shown on Figure~\ref{fig:linearsim}.
	
\begin{figure}[htb]
	\begin{minipage}{.475\linewidth}
		\centering
		\includegraphics[width=\linewidth]{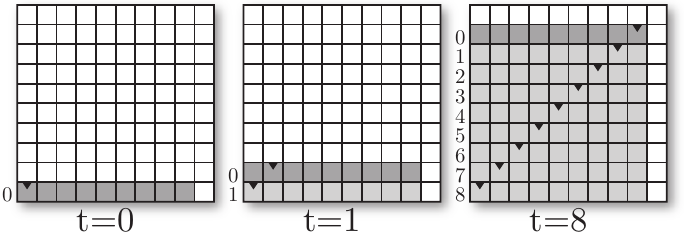}
		\caption{Running $n$ simulations of a 1-dimensional CA with a 2-dimensional CA.}
		\label{fig:linearsim}
	\end{minipage}%
	\hspace{.05\linewidth}%
	\begin{minipage}{.475\linewidth}
		\centering
		\includegraphics[width=\linewidth,page=2]{linearsim.pdf}
		\caption{Real time parallel simulations of a 1DCA. At each time, a new line copies the current unmarked simulation of $\ACA$ and continues from the same point (no delay). Each line has up to two special positions that are marked when the line starts simulating $\ACA$.}
		\label{fig:special positions}
	\end{minipage}
\end{figure}

	At time $t=0$, the input is on the first line and a simulation of $\ACA$ starts on that line with a mark on the leftmost cell of the word. At each subsequent time, the original input is copied on the next line (moving up), and a new simulation of $\ACA$ is started on that line with the mark on the next position (moving right). At time $t=n$, the first line has simulated $n$ steps of $\ACA$ with a mark on the first cell, while the $n$-th line starts a new simulation with the mark on the last cell.
	
	When a simulation finishes, the result is sent back towards the origin (on the first line). At time $t=3n$ the simulation on line $n$ is finished, and at time $t=4n$ the results of all simulations are available on the origin (to be complete, the automaton must also run an extra simulation on the first line that simulates the behavior of $\ACA$ on input $w$ without any mark). The language $\widetilde L$ is therefore recognized in time $n\mapsto 4n$, and by using a linear acceleration we get $\widetilde L\in\CA_2(2n)$.
\end{proof}

\begin{theorem}
	\label{theo:single mark}
	For any language $L\subseteq (\Sigma\times\{0, 1\})^*$ of words having at most one marked position, $L\in \CA(n) \Rightarrow \widetilde L\in\CA_2(n)$.
\end{theorem}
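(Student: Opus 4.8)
The plan is to mimic the proof of Theorem~\ref{theo:linear single mark} but to fix the timing losses that forced the factor-$4n$ blow-up, so that the simulation stays within real time $n-1$. As before, let $\ACA$ be a $1$-dimensional CA recognizing $L$ in real time and let $w\in\Sigma^*$ have length $n$; I want a $2$-dimensional CA $\ACA'$ working on the Moore neighborhood that decides $\widetilde L$ at the origin at time $n-1$. The naive approach — run one simulation of $\ACA$ per line, line $k$ carrying the mark in position $k$, then funnel all answers back to the origin — costs roughly $n$ steps to spread the input up the column, up to $2n$ steps for each simulation, and another $n$ to bring the verdict back down, hence linear but not real time. The key realization (this is exactly what Figure~\ref{fig:special positions} is advertising) is that consecutive simulations differ only in the location of a single marked cell, so line $k+1$ need not restart from scratch: it can be spawned from the ongoing mark-free simulation carried along the first line and simply ``peel off'' at time $k$ with its mark inserted, continuing the computation with no delay. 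This removes the startup cost entirely.

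Concretely, I would maintain on the first line (the $y=0$ row) a running simulation of $\ACA$ on $w$ \emph{with no mark}; the Moore neighborhood lets row $k$ receive, at each step, the state of the cell directly below it in row $k-1$, so at time $k$ row $k$ can copy the current configuration of the mark-free simulation and thenceforth run its own copy of $\ACA$ in which cell $k$ of that row is declared marked. Because inserting a mark at position $k$ at the moment the mark-free simulation has progressed $k$ steps is consistent with what $\ACA$ would have done had the mark been present from the start on all sites outside the light cone of cell $k$ — and inside that cone only $k$ further steps of genuine marked computation remain to be caught up, which happen in real time — no time is lost. The hard part, and the step I expect to be the main obstacle, is the \emph{return trip}: in Theorem~\ref{theo:linear single mark} the answer of simulation $k$ is produced around time $2n$ and then needs up to $n$ more steps to crawl back to the origin, which is fatal for real time. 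Here one must arrange that each simulation's verdict is available exactly when a diagonal signal from $(0, k)$ travelling down-and-left at speed $1$ in the Moore metric can still reach the origin by time $n-1$; that is, the verdict of the $k$-th simulation must be ready by time roughly $n-1-k$ (distance $k$ in the Moore $\ell^\infty$ metric), i.e. it must \emph{finish early in proportion to how high up it runs}.

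To make the verdicts finish early, I would apply the central compression of Section~\ref{sec:central compression} together with Theorem~\ref{theo:markers}: a marked position at a \emph{fixed rational proportion} of the word can be assumed (fuzzy marking suffices, by Theorem~\ref{theo:markers}), and compressing the space-time diagram of $\ACA$ around a mark at proportion $\alpha\in[\frac13,\frac12]$ yields the verdict well before real time, on a non-origin cell, with the ``letter first taken into account'' sites lying along the thick arrows of Figure~\ref{fig:compressed schema}. Running the $n$ simulations along the second dimension, each compressed around its own mark, the $k$-th one completes at a time that is smaller than $n-1$ by an amount linear in $k$, precisely enough to let the down-diagonal signal in the Moore metric deliver all $n$ verdicts (plus the mark-free one) to the origin by time $n-1$; taking the conjunction of the appropriate verdicts there decides $\widetilde L$. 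The delicate bookkeeping is to check that the compression's constant-delay roundings (absorbed by the constant speed-up theorem \cite{mazoyer92}) and the one-step-per-line spawning schedule are jointly tight, i.e. that the slack created by compression is at least the $k$ steps needed for the return diagonal; once the proportions are chosen in $[\frac13,\frac12]$ as dictated by the closing remark of Section~\ref{sec:central compression}, this inequality holds, and $\widetilde L\in\CA_2(n)$.
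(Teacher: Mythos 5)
Your high-level ingredients match the paper's (spawn one simulation per line with no startup delay, use the central compression of Section~\ref{sec:central compression}, invoke Theorem~\ref{theo:markers} to dispense with the compression mark), but the core mechanism you describe does not work, and the place where you put the compression is not where it is actually needed. The fatal step is the spawning rule: you have row $k$ copy the \emph{uncompressed} mark-free configuration at time $k$ and ``declare cell $k$ marked,'' arguing that only the light cone of cell $k$ needs to be caught up. That cone at time $k$ is a triangle of height $k$ whose recomputation is inherently sequential in time: redoing it costs $k$ further steps, during which the correct marked computation advances another $k$ steps, so the delay is never absorbed and the verdict at that row's origin arrives about $k$ steps too late. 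The paper's entire reason for compressing is to fix exactly this: in the compressed automaton $\ACA_C$, the input letters enter the computation \emph{sequentially}, at times proportional to their distance from the compression mark (the thick arrows of Figure~\ref{fig:compressed schema}), so a line activated at time $i$ can still mark a letter that is grouped at its special position \emph{because that letter has not yet been consumed by the simulation} --- no recomputation, hence no delay. You instead use compression only to make verdicts finish early for the return trip, which misses its essential role.

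The geometry of your return trip is also inconsistent. With $n$ lines, one mark per line, the topmost line is at Moore-distance about $n$ from the origin, so by your own accounting its verdict would have to be ready by time roughly $n-1-n<0$; and compressing ``each around its own mark'' is impossible since the compression center must lie in $[\lfloor n/3\rfloor,\lfloor n/2\rfloor]$ while the marks range over all of $\llbracket 0,n-1\rrbracket$. The paper resolves both points at once: all lines run the \emph{same} compression (around the single compression mark), all simulations stay synchronized and deliver their verdicts simultaneously with $\frac{2\alpha}{3}n$ steps to spare, each line handles up to $6$ candidate mark positions (the $3$ letters grouped at each of its at most $2$ special positions), so only $\frac{1-\alpha}{3}n$ lines are needed, and the choice $\alpha\in[\frac13,\frac12]$ gives $\frac{1-\alpha}{3}\leq\frac{2\alpha}{3}$, which is precisely the inequality that lets the diagonal return signal reach the origin in real time.
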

\begin{proof}
	The basic idea is again to run parallel simulations of a 1-dimensional CA on the lines of the 2-dimensional CA, but because the automaton must work in real time it is not possible to waste a linear time starting the simulations, nor a linear time bringing back the results of the farthest simulation to the origin, which is why we use the compressed simulation presented in Section~\ref{sec:central compression}.
	
	Consider a language $L\in (\Sigma\times\{0,1\})^*$ such that all words of $L$ have at most one marked position and a 1-dimensional CA $\ACA$ that recognizes $L$ in real time. We want to describe a 2-dimensional CA $\ACA_2$ that takes an unmarked word $w\in \Sigma^*$ of length $n$ and decides in real time if by adding at most one mark to $w$ we can obtain a word in $L$.
	
	For now, let us assume that the input word $w$ has a mark on a position between $\lfloor\frac{n}{3}\rfloor$ and $\lfloor\frac{n}{2}\rfloor$ so that we can run the compressed simulation easily. This mark will be referred to as the \emph{compression mark}, it is different from the marks of $L$ that we want to simulate on each line.
	
	Instead of simulating $\ACA$, $\ACA_2$ will simulate an automaton $\ACA_C$ that simulates $\ACA$ with a central compression on the compression mark as described in Section~\ref{sec:central compression}.
	
	The behavior of $\ACA_2$ is as follows:
	\begin{itemize}
		\item at time $t=0$, the first line of $\ACA_2$ starts a simulation of $\ACA_C$ as if no letter of the input was marked;
		\item at each time, this ``unmarked'' simulation is copied to the next line (moving up), each line continues the simulation from the step at which it is when copied (so that more and more lines are performing the same simulation of $\ACA_C$, without suffering any delay);
		\item meanwhile, each line has one or two \emph{special positions}. The special position of the first line is the one where the compression mark is, and the special positions of line $(i+1)$ are the one left of the leftmost special position of $i$, and the one right of the rightmost position of $i$ (see Figure~\ref{fig:special positions}). Special positions on each line are marked when the line copies the simulation from the previous line.
		\item during the simulation of $\ACA_C$ by a line, when one of the cells at a special position finishes grouping 3 letters of the input word, 3 new simulations of $\ACA_C$ are started on this line, each considering that there was a mark on one of the letters of the input word that was grouped on the special position. Because each line has at most two special positions, at most 7 simulations of $\ACA_C$ are run in parallel on each line (3 for each special position and the unmarked one).
	\end{itemize}
	
	This construction works because of the properties of the central compression discussed in Section~\ref{sec:central compression}.
	
	First, the simulations of $\ACA_C$ on each line can be performed properly without any delay because the time at which an input letter that eventually is grouped on a special position of the line becomes significant is after the activation of the line and the marking of its special positions. Therefore, all simulations of $\ACA_C$ on all lines are synchronized, and the farther lines do not suffer any delay.
	
	Second, the number of lines really used (on which significant simulations that correspond to a potential mark on an input letter) is equal to the length of the largest side of the compressed area (which is $\frac{1-\alpha}{3}$, see Figure~\ref{fig:compressed schema}). This length is less than the time remaining when the simulations of $\ACA_C$ obtain their result ($\frac{2\alpha}{3}$), so it means that there is enough time to send back the result of the parallel simulations to the origin in real time.
	
	The last detail is now to remove the requirement for the compression mark to be given as input. From Theorem~\ref{theo:markers}, we know that if the computation can be performed in real time with a mark anywhere between positions $\lfloor\frac{n}{3}\rfloor$ and $\lfloor\frac{n}{2}\rfloor$ then it can be done in real time without previous marking. Technically, Theorem~\ref{theo:markers} only applies to 1-dimensional CA, but in this case the 2-dimensional CA performs 1-dimensional computations on each line almost independently so by having each line perform the construction from the proof of Theorem~\ref{theo:markers} we obtain the result for this specific 2-dimensional CA.
\end{proof}

\section{Consequences}
\label{sec:consequences}

Let us now discuss some consequences of Theorem~\ref{theo:single mark}.

\begin{corollary}
	The concatenation $L_1L_2$ of two 1-dimensional real time languages $L_1$ and $L_2$ is recognizable in real time by a 2-dimensional CA working on the Moore neighborhood.
\end{corollary}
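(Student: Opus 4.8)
The plan is to reduce the concatenation problem to a single-mark marked-language question and then invoke Theorem~\ref{theo:single mark}. Given $L_1, L_2 \subseteq \Sigma^*$ both in $\CA(n)$, I would introduce the marked language $L \subseteq (\Sigma \times \{0,1\})^*$ consisting of all words $(u_i, \delta_i)_{i \in \llbracket 0, n-1 \rrbracket}$ such that exactly one position $i_0$ is marked, $u_0 \ldots u_{i_0 - 1} \in L_1$, and $u_{i_0} \ldots u_{n-1} \in L_2$. Every word of $L$ has at most one marked position (in fact exactly one), so $L$ is exactly the kind of language to which Theorem~\ref{theo:single mark} applies, and $\widetilde L = L_1 L_2$ is precisely the concatenation: an unmarked word $w$ is in $\widetilde L$ iff some placement of a single mark splits it into a prefix in $L_1$ and a suffix in $L_2$.

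The first substantive step is therefore to show $L \in \CA(n)$. The idea is that a 1-dimensional real time CA can check both halves simultaneously: to the left of the mark, run a real time recognizer for $L_1$ (on the input restricted to positions $0, \ldots, i_0 - 1$, with the mark acting as a right endpoint), and to the right of the mark run a real time recognizer for the reversal-adapted version of $L_2$. The delicate point is that real time for the prefix of length $i_0$ is $i_0 - 1$ steps, reached at the mark cell, whereas real time for the whole word of length $n$ is $n-1$ steps at the origin; I would have the mark cell compute the $L_1$-verdict by time $i_0 - 1$ and then forward a single bit leftward, which reaches the origin by time $(i_0 - 1) + i_0 \le n - 1$ since $i_0 < n$. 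For the suffix, the standard fact that a CA can recognize a language on a segment anchored at its left end within real time of that segment — equivalently, recognizing $L_2$ when the input arrives from the right — lets the origin collect the $L_2$-verdict in time; combining the two bits at the origin at time $n-1$ gives membership in $L$. A clean alternative is to note that $\CA(n)$ is closed under the relevant ``prefix recognized with a right marker'' and ``suffix recognized'' operations, which are folklore; but I would spell out the signal-timing argument to be safe.

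Once $L \in \CA(n)$ is established, Theorem~\ref{theo:single mark} immediately gives $\widetilde L = L_1 L_2 \in \CA_2(n)$, which is the claimed statement. The main obstacle is the first step: verifying that the two-sided simulation genuinely fits in real time, i.e.\ that the prefix verdict produced at the mark cell at time $i_0-1$ can be transported to the origin by time $n-1$, and symmetrically that the suffix verdict is available at the origin in time. This is really a matter of checking the inequalities on signal travel times and confirming there is no hidden constant-factor slowdown; a constant delay, if any appeared, could be absorbed by the linear-acceleration theorem~\cite{mazoyer92}, but in fact the timing works out exactly because $i_0 + (i_0 - 1) \le 2i_0 - 1 \le n - 1$ fails in general — so one must instead observe that the leftward bit only needs to travel $i_0$ cells starting from time $i_0 - 1$, arriving at time $2i_0 - 1 \le n - 1$ only when $i_0 \le n/2$, and handle $i_0 > n/2$ symmetrically by having the $L_2$-recognizer be the one that finishes early and forwards its bit. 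Carefully splitting on which side of the midpoint the mark lies — exactly the kind of midpoint bookkeeping the Markers section was built to support — resolves the timing and completes the proof.
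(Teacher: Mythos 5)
Your overall reduction is exactly the paper's: define the marked language $L$ of words split by a single mark into an $L_1$-prefix and an $L_2$-suffix, show $L\in\CA(n)$, and apply Theorem~\ref{theo:single mark} to get $\widetilde L = L_1L_2\in\CA_2(n)$. The paper dismisses the middle step as ``easy to check in real time'' and gives no detail, so everything rests on whether your detail is right --- and there the construction you describe has a genuine flaw. You place the $L_1$-verdict at the \emph{mark cell} (the right end of the prefix) at time $i_0-1$ and then ship it leftward. A real-time recognizer, by the paper's definition, delivers its verdict at the \emph{origin}, i.e.\ at the left end of its input segment; producing the verdict for $L_1$ at the right end of the prefix within that prefix's real time amounts to recognizing the reversal of $L_1$ in real time, and closure of $\CA(n)$ under reversal is precisely the open problem of Ibarra and Jiang cited in Section~\ref{sec:markers}. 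The same worry attaches to your ``reversal-adapted version of $L_2$''. The midpoint case split you bolt on at the end does not repair this, because in each case one of the two verdicts is still being produced at the wrong end of its segment.

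The repair is simpler than what you wrote and needs no case split and no reversal. Run the standard real-time recognizer for $L_1$ on cells $0,\dots,i_0-1$, using the mark as a right border: its verdict appears at the origin at time $i_0-1\le n-1$ and can simply be held there. Run a translate of the standard real-time recognizer for $L_2$ anchored at cell $i_0$ on cells $i_0,\dots,n-1$: its verdict appears at cell $i_0$ at time $n-i_0-1$, and a leftward unit-speed signal carries it to the origin at time $(n-i_0-1)+i_0=n-1$, exactly real time, where the two bits are conjoined. Two smaller points you should also fix: the inequality $(i_0-1)+i_0\le n-1$ ``since $i_0<n$'' is false as written; and with the mark on the first letter of $v$ your $L$ misses decompositions with $v=\varepsilon$ when $\varepsilon\in L_2$ --- either admit the corresponding unmarked words into $L$, which Theorem~\ref{theo:single mark} permits since it only requires \emph{at most} one mark, or handle that case separately.
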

\begin{proof}
	Given a word $w=uv$ with a mark between $u$ and $v$, it is easy to check in real time if $u\in L_1$ and $v\in L_2$, so from Theorem~\ref{theo:single mark} the unmarked language is in $\CA_2(n)$.
\end{proof}

\begin{corollary}
	If $\CA(n) =\CA_2(n)$, $\CA(n)$ is closed under concatenation.
\end{corollary}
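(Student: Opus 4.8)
The plan is to read this off directly from the preceding corollary together with the collapse hypothesis, so the work is essentially already done. Concretely, I would start by fixing two arbitrary languages $L_1,L_2\in\CA(n)$ over an alphabet $\Sigma$, and recall that the preceding corollary shows $L_1L_2\in\CA_2(n)$: given a word $w=uv$ carrying a single mark between $u$ and $v$, a $1$-dimensional CA can check in real time whether $u\in L_1$ and $v\in L_2$ (run the real-time recognizer for $L_1$ on the cells left of the mark and the one for $L_2$ on the cells right of it, all synchronized so that the verdict is available at the origin at time $n-1$), hence the marked language lies in $\CA(n)$, and Theorem~\ref{theo:single mark} applied to it yields that its unmarked projection, which is exactly $L_1L_2$, lies in $\CA_2(n)$.

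Next I would simply substitute the hypothesis $\CA(n)=\CA_2(n)$ into this conclusion: since $L_1L_2\in\CA_2(n)=\CA(n)$, and $L_1,L_2$ were arbitrary languages of $\CA(n)$, the class $\CA(n)$ is closed under concatenation. That is the whole argument.

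I do not expect any genuine obstacle here; the one point worth a second look is that the preceding corollary really is available for \emph{every} pair $L_1,L_2\in\CA(n)$ (it is — its proof only uses that each $L_i$ has a real-time $1$-dimensional recognizer and that exactly one mark is placed, namely between $u$ and $v$), and that Theorem~\ref{theo:single mark} is stated for languages whose words have \emph{at most} one marked position, which covers the single-mark language built from $L_1,L_2$. With those two checks in place the corollary is immediate, and it can be presented in a couple of lines without any further construction.
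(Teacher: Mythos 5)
Your proposal is correct and is exactly the argument the paper intends: the corollary is an immediate consequence of the preceding corollary (which puts $L_1L_2$ in $\CA_2(n)$ via Theorem~\ref{theo:single mark}) combined with the hypothesis $\CA(n)=\CA_2(n)$. The paper leaves this step implicit, and your two sanity checks (arbitrariness of $L_1,L_2$ and the ``at most one mark'' hypothesis of Theorem~\ref{theo:single mark}) are the right ones.
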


Without the assumption that $\CA(n)=\CA_2(n)$, it is still unknown whether $\CA(n)$ is closed under concatenation. Actually, it is also unknown whether $\CA(2n)$ is closed under concatenation and even if the concatenation of two languages in $\CA(n)$ is in $\CA(2n)$.

\begin{corollary}
	For any language $L$ and any $\alpha \in \QQ \cap [0,1],\ \Lprop{\alpha} \in \CA(n) \Rightarrow L \in \CA_2(n)$.
\end{corollary}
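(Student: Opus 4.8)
The plan is to obtain this as a direct corollary of Theorem~\ref{theo:single mark}, without re-running any of the machinery of Sections~\ref{sec:markers}--\ref{sec:power of space}. The observation that makes this work is that $\Lprop{\alpha}$ is already a language of marked words in which every word carries exactly one marked position, so it trivially satisfies the hypothesis of Theorem~\ref{theo:single mark} (``at most one marked position''). Applying that theorem to $L' = \Lprop{\alpha}$ gives $\Lprop{\alpha}\in\CA(n) \Rightarrow \widetilde{\Lprop{\alpha}}\in\CA_2(n)$.

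The remaining step is to identify the projection $\widetilde{\Lprop{\alpha}}$ and check that it equals $L$. This is immediate from the definition of $\Lprop{\alpha}$: a word $u_0\ldots u_{n-1}$ of $L$ has exactly one extension lying in $\Lprop{\alpha}$, the one that marks position $\lfloor\alpha n\rfloor$ and leaves every other position unmarked, while a word not in $L$ has none. Hence the first projection of $\Lprop{\alpha}$ is precisely $L$, so $\widetilde{\Lprop{\alpha}} = L$ and we conclude $L\in\CA_2(n)$.

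I do not expect any real obstacle here; the content of the corollary is essentially the remark that exact proportional marking is a special case of the single-mark setting of Theorem~\ref{theo:single mark}. The only things to be careful about are (i) verifying that $\Lprop{\alpha}$ really does fit the ``at most one mark'' hypothesis and (ii) that its projection recovers $L$ exactly, with no spurious words added or removed --- both of which follow straight from the definitions. Note in passing that the rationality assumption on $\alpha$ is not actually used in this reduction; it only bears on whether the hypothesis $\Lprop{\alpha}\in\CA(n)$ is a natural one to assume, so the argument in fact establishes the statement for arbitrary $\alpha\in[0,1]$.
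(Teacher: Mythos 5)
Your proof is correct and takes essentially the same route as the paper: the paper's one-line argument likewise feeds the single-marked language $\Lprop{\alpha}$ (each of whose words carries exactly one mark, hence at most one) into Theorem~\ref{theo:single mark} and uses that its projection is $L$. Your closing remark is also sound: the paper additionally invokes a real-time check that the mark sits at position $\lfloor\alpha n\rfloor$ --- the only place where rationality of $\alpha$ would be used --- but this is redundant, since by the definition of recognition a real-time recognizer for $\Lprop{\alpha}$ already rejects all mis-marked words.
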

\begin{proof}
	Given a word with one marked position, it is easy to check simultaneously in real time if the mark is at position $\lfloor\alpha n\rfloor$ and if the marked word is in $\Lprop{\alpha}$.
\end{proof}

\begin{corollary}
	If $\CA(n)=\CA_2(n)$, for any language $L$ and any $\alpha \in \QQ \cap [0,1],\ \Lprop{\alpha} \in \CA(n) \Rightarrow L \in \CA(n)$.
\end{corollary}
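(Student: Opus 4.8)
The plan is to obtain this statement as an immediate consequence of the preceding corollary together with the collapse hypothesis. Assume $\CA(n) = \CA_2(n)$, fix a language $L$ and a rational $\alpha \in [0,1]$, and suppose $\Lprop{\alpha} \in \CA(n)$. First I would apply the previous corollary to deduce $L \in \CA_2(n)$; recall that this corollary is itself proved by noting that, on an input carrying a single mark, a real-time $1$-dimensional CA can check in parallel both that the mark lies precisely at position $\lfloor\alpha n\rfloor$ (feasible since $\alpha$ is rational) and that the resulting marked word is in $\Lprop{\alpha}$, and then invoking Theorem~\ref{theo:single mark} on that marked language. Once $L \in \CA_2(n)$ is established, the hypothesis $\CA_2(n) = \CA(n)$ gives $L \in \CA(n)$, as required.

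There is essentially no obstacle in the argument itself: it is the short chain of implications $\Lprop{\alpha}\in\CA(n) \Rightarrow L\in\CA_2(n) \Rightarrow L\in\CA(n)$. All the genuine work has already been done upstream — in particular Theorem~\ref{theo:single mark}, which relies on the central compression of Section~\ref{sec:central compression} and on Theorem~\ref{theo:markers} to dispense with the compression mark. The one point worth stating explicitly is that the use of Theorem~\ref{theo:single mark} requires the marked language $\{\, w \text{ with a mark at } \lfloor\alpha n\rfloor \mid w \in L \,\}$ to be a language of words with \emph{at most} one marked position and to lie in $\CA(n)$; both hold here, the latter because $\alpha$ is rational, so the predicate ``the mark is at position $\lfloor\alpha n\rfloor$'' is itself decidable in real time and can be intersected with the real-time recognizer for $\Lprop{\alpha}$ on an extra layer. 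This corollary can be read as the two-dimensional analogue of the footnote observation that, under $\CA(n) = \CA(2n)$, exact proportional marking at a rational ratio does not help real-time recognition.
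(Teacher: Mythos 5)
Your proof is correct and matches the paper's intended argument exactly: the paper leaves this corollary without an explicit proof because it is precisely the chain $\Lprop{\alpha}\in\CA(n) \Rightarrow L\in\CA_2(n)$ (the preceding corollary) followed by the hypothesis $\CA_2(n)=\CA(n)$. Your additional remarks about the rationality of $\alpha$ and the applicability of Theorem~\ref{theo:single mark} accurately reflect where the real work lies.
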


Under the assumption that $\CA(n)=\CA_2(n)$, Theorem~\ref{theo:single mark} can also be strengthened:

\begin{corollary}
	\label{cor:poly}
	If $\CA(n)=\CA_2(n)$, for any $k\in\NN$ and any language $L\subseteq (\Sigma\times\{0, 1\})^*$ of words having up to $k$ marked positions, $L\in \CA(n) \Rightarrow \widetilde L\in\CA(n)$.
\end{corollary}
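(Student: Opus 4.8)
The plan is to prove the statement by induction on $k$, removing one mark per step. Each step will invoke Theorem~\ref{theo:single mark} to pass from a language with at most $j$ marked positions to a related language with at most $j-1$ marked positions, and the hypothesis $\CA(n)=\CA_2(n)$ will be used precisely to bring the resulting language, a priori only in $\CA_2(n)$, back into $\CA(n)$ so that the induction can proceed. The base case $k=0$ is immediate: a word with no marked position is, up to the trivial mark layer, a word of $\Sigma^*$, so $\widetilde L=L\in\CA(n)$.

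For the inductive step, assume the result for $k-1$ and let $L\subseteq(\Sigma\times\{0,1\})^*$ be a language of words with at most $k$ marked positions such that $L\in\CA(n)$. First I would introduce an auxiliary language $\bar L$ over the alphabet $(\Sigma\times\{0,1\})\times\{0,1\}$, that is, over the base alphabet $\Sigma\times\{0,1\}$ equipped with one extra ``secondary'' mark layer. A word $((u_i,\delta_i),\epsilon_i)_i$ is declared to be in $\bar L$ when: at most one $\epsilon_i$ equals $1$; at most $k-1$ of the $\delta_i$ equal $1$; and the word $(u_i,\delta_i\vee\epsilon_i)_i$ obtained by merging the two mark layers belongs to $L$. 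The first two conditions are checked in real time by bounded counters, and the last one by feeding the merged word to the real-time recognizer of $L$; hence $\bar L\in\CA(n)$, and $\bar L$ is a language of words with at most one (secondary) marked position.

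Applying Theorem~\ref{theo:single mark} to $\bar L$ (taking $\Sigma\times\{0,1\}$ as the base alphabet) gives $\widetilde{\bar L}\in\CA_2(n)$, hence $\widetilde{\bar L}\in\CA(n)$ by the hypothesis $\CA(n)=\CA_2(n)$. Unwinding the definitions, $\widetilde{\bar L}$ is the language of words over $\Sigma\times\{0,1\}$ carrying at most $k-1$ marks that become a word of $L$ after adding at most one further mark; call it $L^\flat$. Thus $L^\flat\in\CA(n)$ and $L^\flat$ is a language of words with at most $k-1$ marked positions, so the induction hypothesis yields $\widetilde{L^\flat}\in\CA(n)$. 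It remains only to check the identity $\widetilde{L^\flat}=\widetilde L$: a word $x\in\Sigma^*$ lies in $\widetilde{L^\flat}$ iff it admits a marking with at most $k-1$ marks that can be completed by at most one more mark to a word of $L$, which is equivalent to $x$ admitting a marking with at most $k$ marks lying in $L$, i.e.\ $x\in\widetilde L$. This closes the induction.

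The routine parts are the real-time verification of the side conditions defining $\bar L$ and the two language identities $\widetilde{\bar L}=L^\flat$ and $\widetilde{L^\flat}=\widetilde L$. The one point that needs a little care is the mark-budget bookkeeping, namely insisting that the original layer of $\bar L$ carry at most $k-1$ marks, which is what guarantees that $\widetilde{\bar L}$ is again a bounded-mark-count language to which the induction hypothesis applies; and it is worth noting explicitly that the hypothesis $\CA(n)=\CA_2(n)$ is exactly what makes each peeling step land back in $\CA(n)$ rather than merely in $\CA_2(n)$, so that without it the argument stops after a single step and reduces to Theorem~\ref{theo:single mark}.
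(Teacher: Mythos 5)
Your proof is correct and follows essentially the same route as the paper: induct on the number of marks, move one mark onto an auxiliary single-mark layer, erase it with Theorem~\ref{theo:single mark}, use $\CA(n)=\CA_2(n)$ to land back in $\CA(n)$, and invoke the induction hypothesis. The only (immaterial) difference is that the paper deterministically distinguishes the \emph{first} mark by a letter-to-letter recoding of $L$, whereas you nondeterministically lift an arbitrary mark onto the auxiliary layer subject to an explicit budget of $k-1$ remaining primary marks; both bookkeeping devices make the two projection identities you verify go through.
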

\begin{proof}
	By induction on $k$. The case $k=1$ is a direct consequence of Theorem~\ref{theo:single mark} and the assumption that $\CA(n)=\CA_2(n)$.
	
	If the corollary is true for up to $k$ marks, and $L$ is a language of words with up to $(k+1)$ marks, let us consider the language $L'\subseteq (\Sigma\times\{0,1\}^2)^*$ obtained from $L$ by changing the first marked letter $(u_i,1)$ to $(u_i,0,1)$, all other marked letters $(u_i,1)$ into $(u_i,1,0)$ and all unmarked letters $(u_i,0)$ into $(u_i,0,0)$ (effectively distinguishing the first mark from the others).
	
	If $L\in\CA(n)$ then $L'\in\CA(n)$ since it's possible to simulate the recognition of $L$ by considering that all marks are the same, while independently checking in real time that the distinguished mark is the first marked position. From Theorem~\ref{theo:single mark} the language of words in $L$ in which the first mark has been removed is therefore in $\CA_2(n)$ and hence also in $\CA(n)$. The words of this language have at most $k$ marks so from the induction hypothesis, $L\in \CA(n)$.
\end{proof}

\newpage
\bibliographystyle{plain}
\bibliography{stacs27Grandjean}

\end{document}